\newcommand{\ignore}[1]{}
\newcommand{\ignorecalc}[1]{}
\renewcommand*{\backref}[1]{}
\renewcommand*{\backrefalt}[4]{%
	\ifcase #1 %
		(Not cited) %
	\or
		(Cited on page~#2)%
	\else 
		(Cited on pages~#2)%
	\fi
}
\newcommand{\RR}{{\mathbb{R}}} 
\newcommand{\OO}{\mathcal{O}} 
\DeclareMathOperator{\vol}{Vol}
\renewcommand{\det}{\ensuremath{\text{det}}} 
\newcommand{\Prob}[1]{\ensuremath{\mathbb{P}\left(#1\right)}}
\newtheorem{theorem}{Theorem}
\newtheorem{lemma}{Lemma}
\newtheorem{remark}{Remark}
\newcommand{\Pset}{\mathcal{P}}
\journal{IPL}
\begin{document}

\begin{frontmatter}


\title{Low Dimensional Euclidean Volume Preserving Embeddings}


\author{Anastasios Zouzias}
\ead{zouzias@cs.toronto.edu}
\address{University of Toronto}
\begin{abstract}
Let $\Pset$ be an $n$-point subset of Euclidean space and $d\geq 3$ be an integer. In this paper we study the following question: What is the smallest (normalized) relative change of the volume of subsets of $\Pset$ when it is projected into $\RR^d$.
We prove that there exists a linear mapping $f:\Pset \mapsto \RR^d$ that relatively preserves the volume of all subsets of size up to $\lfloor d/2\rfloor$ within at most a factor of $\OO(n^{2/d}\sqrt{\log n \log\log n})$.
\end{abstract}
\begin{keyword}
Volume \sep Embeddings \sep Dimensionality Reduction \sep Discrete Geometry \sep Distortion
\end{keyword}
\end{frontmatter}
\section{Introduction}
A classical result of Johnson and Lindenstrauss~\cite{jl} states that any $n$-point subset of Euclidean space can be projected into $\OO(\log n)$ dimensions while preserving the metric structure of the set. A natural question to pose would be what is the smallest distortion of any $n$-point subset of Euclidean space when it is projected into (fixed) $d$ dimensions. This problem was first studied by Matou\v{s}ek~\cite{mat90}, who proved an $\OO(n^{2/d}\sqrt{\log n /d})$ upper bound on the distortion by projecting the points into $\RR^d$ using a random $d$-dimensional subspace. In Section~\ref{sec:distance} we re-prove Matou\v{s}ek's result using the simplified analysis of~\cite{jl-simple,im} adapted in this setting, i.e., bounding the distortion having fixed dimension instead of bounding the target dimension having fixed distortion. Although the simplified proof of the above result is well-known and well-understood, we hope that is not redundant and that it helps the reader to digest the following theorem
\begin{theorem}\label{thm:low-dim-vol}
Let $\Pset$ be a $n$-point subset of $\RR^N$ and let $3\leq d \leq c_3\log n$. Then there is a linear mapping $f:\Pset \mapsto \RR^d$ such that 
\begin{equation*}
	\forall S\subset \Pset, |S|\leq \lfloor d/2\rfloor \quad   1\leq \left(\frac{\vol(f(S))}{\vol (S)}\right)^{\frac1{|S| - 1}} \leq c_4 n^{2/d}\sqrt{\log n\log\log n},
\end{equation*}
where $c_3,c_4>0$ are absolute constants, and $\vol(S)$ is the $(|S|-1)$-dimensional volume of the convex hull of $S$.
\end{theorem}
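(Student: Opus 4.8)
The plan is to take $f$ to be a rescaled Gaussian projection. Let $G$ be a $d\times N$ matrix with i.i.d.\ standard normal entries, put $f_0(x)=\tfrac1{\sqrt d}Gx$, and set $f=\rho f_0$, where the scalar $\rho>0$ is pinned down only at the very end so that the smallest normalized expansion realized equals exactly $1$; since rescaling $f_0$ by $\rho$ multiplies every $(\vol(f(S))/\vol(S))^{1/(|S|-1)}$ by the same factor $\rho$, it suffices to bound, with positive probability, the ratio of the largest to the smallest of these quantities over all $S\subseteq\Pset$ with $2\le|S|\le\lfloor d/2\rfloor$. To reduce to one scalar per subset, fix $S$ with $|S|=k$, pick a base point $v_0\in S$, and let $M_S\in\RR^{N\times(k-1)}$ carry the edge vectors $v-v_0$ ($v\in S\setminus\{v_0\}$) as columns, so that $\vol(S)^2=\det(M_S^{\top}M_S)/((k-1)!)^2$ and $\vol(f_0(S))$ is given by the same formula with $M_S$ replaced by $\tfrac1{\sqrt d}GM_S$. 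Writing the polar decomposition $M_S=U_SP_S$ with $U_S\in\RR^{N\times(k-1)}$ column-orthonormal, the Gram factor cancels and
\[
 \Big(\frac{\vol(f_0(S))}{\vol(S)}\Big)^{2}=\frac1{d^{\,k-1}}\,\det\!\big((GU_S)^{\top}GU_S\big).
\]
By the rotational invariance of $G$, the matrix $GU_S$ has the law of a $d\times(k-1)$ matrix $W$ with i.i.d.\ standard normal entries, and Bartlett's decomposition gives $\det(W^{\top}W)\overset{d}{=}\prod_{i=1}^{k-1}\chi^2_{d-i+1}$ with the factors independent. Hence, for each fixed $S$,
\[
 \Big(\frac{\vol(f_0(S))}{\vol(S)}\Big)^{\frac1{k-1}}\ \overset{d}{=}\ \frac1{\sqrt d}\Big(\textstyle\prod_{i=1}^{k-1}\chi^2_{d-i+1}\Big)^{\frac1{2(k-1)}}.
\]
The crucial point is that $k\le\lfloor d/2\rfloor$ forces every degree of freedom $d-i+1$ to be at least $d/2$, and this is what keeps the two tail estimates below uniform in $k$.

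For the upper tail I would use the AM--GM inequality to replace the product by a sum: $\big(\prod_i\chi^2_{d-i+1}\big)^{1/(k-1)}\le\tfrac1{k-1}\sum_i\chi^2_{d-i+1}$, and since the summands are independent the right-hand side equals $\tfrac1{k-1}\chi^2_M$ with $M=\sum_{i=1}^{k-1}(d-i+1)\in[\tfrac34(k-1)d,(k-1)d]$ --- a chi-square with $\Theta(kd)$ degrees of freedom, hence tightly concentrated. The standard bound $\Pr[\chi^2_M\ge M\lambda]\le e^{-M(\lambda-1-\ln\lambda)/2}$, together with a union bound (it suffices that each of the at most $\binom nk\le n^k$ subsets of size $k$ violate either tail bound with probability at most $\tfrac12 n^{-k}$, so that the total failure probability, summed over $k$ as well, stays below $1$), shows that with high probability, simultaneously for all admissible $S$,
\[
 \Big(\frac{\vol(f_0(S))}{\vol(S)}\Big)^{\frac1{k-1}}=\OO\!\big(\sqrt{\log n/d}\big);
\]
here the hypothesis $d\le c_3\log n$ is exactly what keeps this bound polylogarithmic for the smallest values of $d$.

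For the lower tail --- the heart of the argument --- I need, for a fixed $k$-subset, $\Pr\big[\prod_{i=1}^{k-1}\chi^2_{d-i+1}<\delta\big]\le\tfrac12 n^{-k}$ with $\delta$ as large as possible. I would first dominate each $\chi^2_{d-i+1}$ stochastically from below by an independent copy of $\chi^2_{d-k+2}$ (legitimate because $d-i+1\ge d-k+2$), reducing to the lower tail of a product of i.i.d.\ $\chi^2_{d-k+2}$'s, and then estimate that tail by the exponential-moment method, using $\EE\big[(\chi^2_m)^{-\theta}\big]=2^{-\theta}\Gamma(m/2-\theta)/\Gamma(m/2)$ with $\theta$ taken just below $m/2$; morally the product is small precisely when the factor with the fewest degrees of freedom absorbs all the smallness, exactly as in the $k=2$ case, which is the distance lower bound re-proved in Section~\ref{sec:distance}. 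Solving the resulting inequality for $\delta$ and undoing the normalization gives, with high probability and for every admissible $S$, $\big(\vol(f_0(S))/\vol(S)\big)^{1/(k-1)}=\Omega(n^{-2/d})$ up to lower-order logarithmic factors. Dividing the upper bound by the lower bound, the factors $\rho$ disappear, the surviving logarithmic terms combine into the extra $\sqrt{\log\log n}$, and one obtains a ratio $\OO\big(n^{2/d}\sqrt{\log n\log\log n}\big)$; finally, choosing $\rho$ to be the reciprocal of the realized minimum normalized expansion turns the left-hand inequality ``$\ge1$'' into an equality. I expect the main obstacle to be precisely this lower-tail estimate: subsets of size close to $\lfloor d/2\rfloor$ number about $n^{\Theta(d)}$, so essentially no slack is available, and the delicate point is to check that the exponent of $n$ emerging from the product's lower tail is never worse than $2/d$ --- which, as above, reduces to a simple inequality relating $d$ and $|S|$ --- while keeping under control the polylogarithmic prefactor contributed by the $k-1$ chi-square factors.
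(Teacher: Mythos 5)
Your proposal follows the paper's skeleton: a Gaussian linear map, the identity $\bigl(\vol(f_0(S))/\vol(S)\bigr)^2\sim\prod_{i=1}^{k-1}\chi^2_{d-i+1}$ (the paper's Lemma~\ref{lem:vol_stability}), separate contraction/expansion tail bounds for each subset size, a union bound over the at most $n^{k}$ subsets of each size, and a final rescaling that turns the contraction bound into the lower bound $1$. The genuine difference is how you handle the product of chi-squares: the paper invokes Gordon's sandwich (Theorem~\ref{thm:gordon}), which replaces $s\bigl(\prod_i\chi^2_{d-i+1}\bigr)^{1/s}$ by a single chi-square on each side, whereas you use AM--GM for the upper tail and a negative-moment (Mellin) bound, after stochastically lowering every factor to $\chi^2_{d-k+2}$, for the lower tail. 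The upper-tail step is fine and in fact nearly reproduces Gordon's upper comparison as an almost-sure inequality (your $\sum_{i\le k-1}(d-i+1)$ degrees of freedom differ from the paper's $s(d-s+1)+(s-1)(s-2)/2$ by only $s-1$), and it gives $\OO(\sqrt{\log n/d})$ for the normalized ratio, which is even slightly stronger than the paper's $b=c_2\sqrt{\log n\log\log n}$. (Minor bookkeeping: ``each subset fails with probability at most $\tfrac12 n^{-k}$'' does not sum to less than $1$ over the $\lfloor d/2\rfloor$ sizes; use $\tfrac1{2d}n^{-k}$, which costs nothing.)

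The gap is exactly at the step you flag and leave unresolved: the lower tail. Your bound $\Pr[\prod_i Y_i\le\delta]\le\delta^{\theta}\bigl(2^{-\theta}\Gamma(m/2-\theta)/\Gamma(m/2)\bigr)^{s}$ with $m=d-s+1$, optimized at $\theta=m/2-\beta$ with $\beta\approx s/\ln(1/\delta)$, yields the Gordon-type main term $\bigl(ea^2/m\bigr)^{sm/2}$ \emph{times} an extra factor of order $\bigl(C\log n/d\bigr)^{s}$ (this is not an artifact of Markov: the true small-ball probability of the product carries $(\ln(1/\delta))^{s-1}$, unlike the single chi-square produced by Gordon's lower comparison). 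You neither compute this loss nor show it can be absorbed, and your own remark that ``essentially no slack is available'' suggests it cannot; as written, the sketch therefore does not deliver $a=\Omega(n^{-2/d})$, and the leftover logarithms would not ``combine into $\sqrt{\log\log n}$'' (in the paper that factor comes from its choice of $b$ in the expansion bound, not from the contraction side). The step can be repaired, but this is precisely where the work lies: the exponent condition $\frac{s+1}{s(d-s+1)}\le\frac{2}{d}$ is equivalent to $(d-2s)(s-1)\ge0$ (the paper's convexity argument for $h_d$), with equality only at $s=1$ and $s=d/2$; at $s=1$ there is no product and hence no logarithmic loss, while for $2\le s\le\lfloor d/2\rfloor-1$ the strict slack $(d-2s)(s-1)/d\ge 2(s-1)/d$ in the exponent of $n$ absorbs the $\bigl(C\log n/d\bigr)^{s}$ factor provided $c_3$ is chosen small enough that $2\ln n/d\ \ge\ 2\ln\bigl(C\log n/d\bigr)$. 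Carrying out this uniform-in-$s$-and-$d$ verification is the missing content; the paper sidesteps it entirely because Gordon's theorem hands it a single chi-square with no extra polylogarithmic prefactor.
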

\textbf{Remark:} The case where we fix the relative change of the volume of subsets to be arbitrary close to one, and ask what is the minimum dimension of such a mapping was studied in~\cite{MZ08}. 

Notice that if we only require to preserve pairwise distances the best upper bound is $\OO(n^{2/d}\sqrt{\log n/d})$, see Section~\ref{sec:distance}; therefore our result can be thought of as a generalization of the distance preserving embeddings since it also guarantees distance preservation. Moreover, there exists $n$-point subset of Euclidean space that any embedding onto $\RR^d$ has distortion $\Omega(n^{1/\lfloor (d+1)/2 \rfloor})$~\cite{mat90}, and thus the above worst-case upper bound cannot be much improved. 
\section{Preliminaries and Technical Lemmas}\label{sec:defn}
We start by defining an (stochastic) ordering between two random variables $X$ and $Y$, but first let's motivate this definition. Assume that we have upper and lower bounds on the distribution function of $Y$, and also assume that it's hard to give precise bounds on the distribution function of $X$. Using this notion of ordering, if $X$ ``smaller than'' $Y$, then we can bound the ``complicated'' variable $X$ through bounding the ``easy'' variable $Y$. We use this notion extensively in this paper.

More formally, let $X$ and $Y$ be two random variables, not necessarily on the same probability space. The random variable $X$ is stochastically smaller than the random variable $Y$ when, for every $x\in{\RR}$, the inequality 
\begin{equation}\label{eq:rv:ordering}
\Prob{X\leq x} \geq \Prob{Y\leq x}
\end{equation}
holds. We denote this by $X\preceq Y$.

Next we recall known results about the Chi-square distribution and also give bounds on its' cumulative distribution function. If $X_i,\ i=1,\dots ,d$ be independent, identically distributed normal random variables, then the random variable $\chi_d^2=\sum_{i=1}^{d}{X_i^2}$ is a Chi-square random variable with $d$ degrees of freedom. Notice that the expected value of $\chi_d^2$ is $d$. It is well known~\cite[Chapter~II, p.~$47$]{book:feller:vol2} that the Chi-square distribution is a special case of the Gamma distribution and its cumulative distribution function is given by
\begin{equation}
\Prob{\chi_d^2 \leq t}= \frac{\gamma(d/2,t/2)}{\Gamma(d/2)},
\end{equation}
where $\Gamma(x)$ is the Gamma function, $\gamma(a,x)=\int_{0}^{x} {t^{a-1} e^{-t}\, dt}$ and $\Gamma(a,x)=\int_{x}^{\infty} {t^{a-1} e^{-t}\, dt}$  is the lower and upper \emph{incomplete Gamma function}, respectively. Next we present some bounds on the Gamma and incomplete Gamma functions that we use in Sections~\ref{sec:distance}, \ref{sec:volume}. We start by presenting the following bound on the Gamma function, see for instance~\cite[Lemmas~$2.5,2.6,2.7$]{condRandomMatrix} and~\cite[p.$253$]{book:WW}.
\begin{lemma}[Stirling Bound on Gamma Function]\label{lem:Gamma}
If  $\Gamma(a)=\int_{0}^{\infty}{e^{-t}t^{a-1}\,dt}$, where $a>0$, then 
\begin{equation}
 \sqrt{2\pi} a^{a+1/2} e^{-a} < \Gamma (a+1) < \sqrt{2\pi} a^{a+1/2}e^{-a+\frac1{12a}},
\end{equation}
\end{lemma}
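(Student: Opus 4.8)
The plan is to reduce the two-sided inequality to a single statement about the \emph{Binet function} and then establish that statement by an elementary telescoping argument. Concretely, I would set
\[
\mu(a) \;:=\; \log\Gamma(a) - \left(a - \tfrac12\right)\log a + a - \tfrac12\log(2\pi), \qquad a>0 .
\]
Using the functional equation $\Gamma(a+1)=a\,\Gamma(a)$ one checks at once that $\log\Gamma(a+1) = (a+\tfrac12)\log a - a + \tfrac12\log(2\pi) + \mu(a)$, i.e. $\Gamma(a+1) = \sqrt{2\pi}\,a^{a+1/2}e^{-a}\,e^{\mu(a)}$. Hence Lemma~\ref{lem:Gamma} is exactly equivalent to the single bound $0 < \mu(a) < \frac{1}{12a}$ for every $a>0$: the left inequality gives $e^{\mu(a)}>1$ and the right one gives $e^{\mu(a)}<e^{1/(12a)}$.

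Next I would analyze the one-step difference of $\mu$. A direct computation (again using $\Gamma(a+1)=a\Gamma(a)$) gives
\[
\mu(a) - \mu(a+1) \;=\; \left(a+\tfrac12\right)\log\frac{a+1}{a} - 1 .
\]
Introducing $x := \tfrac{1}{2a+1}\in(0,1)$, so that $\tfrac{a+1}{a}=\tfrac{1+x}{1-x}$ and $a+\tfrac12=\tfrac{1}{2x}$, and substituting the power series $\log\tfrac{1+x}{1-x} = 2\sum_{j\ge 0}\frac{x^{2j+1}}{2j+1}$ (valid for $|x|<1$), the right-hand side collapses to $\sum_{j\ge 1}\frac{x^{2j}}{2j+1}$, which is manifestly positive. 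Bounding each coefficient by $\tfrac{1}{2j+1}\le\tfrac13$ and summing the resulting geometric series yields
\[
0 \;<\; \mu(a)-\mu(a+1) \;<\; \frac{x^{2}}{3(1-x^{2})} \;=\; \frac{1}{3\big((2a+1)^{2}-1\big)} \;=\; \frac{1}{12a}-\frac{1}{12(a+1)} .
\]

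I would then close the argument by telescoping: summing the last display over $a,a+1,\dots,a+m-1$ gives $0<\mu(a)-\mu(a+m)<\frac{1}{12a}-\frac{1}{12(a+m)}<\frac{1}{12a}$ for every integer $m\ge 1$, and letting $m\to\infty$ together with $\lim_{m\to\infty}\mu(a+m)=0$ produces $0<\mu(a)<\frac{1}{12a}$; the upper bound stays strict because each term of $\mu(a)=\sum_{k\ge 0}(\mu(a+k)-\mu(a+k+1))$ strictly underestimates the corresponding term of $\sum_{k\ge 0}\big(\frac{1}{12(a+k)}-\frac{1}{12(a+k+1)}\big)$. The one non-elementary ingredient here is precisely the fact $\mu(x)\to 0$ as $x\to\infty$, which is where the constant $\sqrt{2\pi}$ is pinned down, and I expect this to be the main obstacle of the proof; I would discharge it by running the estimates above along the integers and identifying the additive constant through Wallis' product (equivalently, the $\Gamma$-duplication formula), or simply quote it from~\cite{book:WW,condRandomMatrix}. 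An alternative that sidesteps the limiting step entirely is Binet's first integral formula $\log\Gamma(a)=(a-\tfrac12)\log a - a + \tfrac12\log(2\pi) + \int_0^\infty \frac{2\arctan(t/a)}{e^{2\pi t}-1}\,dt$, from which both bounds drop out immediately using $0<\arctan(s)<s$ for $s>0$ and $\int_0^\infty \frac{t}{e^{2\pi t}-1}\,dt=\frac{1}{24}$.
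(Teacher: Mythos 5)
Your argument is correct, but note that it differs from the paper simply because the paper offers no proof of Lemma~\ref{lem:Gamma} at all: the bound is quoted as a known fact with pointers to \cite[Lemmas~2.5--2.7]{condRandomMatrix} and \cite[p.~253]{book:WW}. What you supply is essentially the classical self-contained derivation that those references encapsulate: writing $\Gamma(a+1)=\sqrt{2\pi}\,a^{a+1/2}e^{-a}e^{\mu(a)}$ and reducing the lemma to $0<\mu(a)<\frac{1}{12a}$, then controlling the one-step difference via the series for $(a+\tfrac12)\log\frac{a+1}{a}-1$ in the variable $x=\frac{1}{2a+1}$, geometric domination, and telescoping. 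Your computations check out: the difference identity, the collapse to $\sum_{j\ge 1}\frac{x^{2j}}{2j+1}$, the identity $\frac{x^{2}}{3(1-x^{2})}=\frac{1}{12a}-\frac{1}{12(a+1)}$, and the preservation of strictness under summation are all right. The one step that deserves a word more than you give it is $\mu(a+m)\to 0$: monotonicity and boundedness of the telescoped tails give existence of the limit, and Wallis' product pins it to zero along the integers, but passing from integer arguments to arbitrary real $a>0$ needs one further ingredient (e.g.\ Euler's limit formula for $\Gamma$, log-convexity, or monotonicity of $\mu$); you flag this honestly as the main obstacle, and your alternative via the Binet integral representation (using $0<\arctan s<s$ and $\int_0^\infty t/(e^{2\pi t}-1)\,dt=\tfrac{1}{24}$) discharges both strict bounds at once with no limiting argument, which is the cleanest way to close the lemma if one does not simply cite it as the paper does. (Minor nomenclature: the arctangent representation is usually called Binet's \emph{second} formula.) In terms of trade-offs, the paper's citation keeps the note short, while your route makes the constant $\frac{1}{12a}$ — which is actually used quantitatively in Sections~\ref{sec:distance} and~\ref{sec:volume} only through the crude lower bound on $\Gamma(d/2)$ — verifiable without consulting the references.
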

Next we upper bound $\gamma(a,x)$. Note that $\gamma(a,x)=\int_{0}^{x} {t^{a-1} e^{-t}\, dt}\leq \int_{0}^{x} {t^{a-1}\, dt}$, hence
\begin{equation}\label{eq:incgamma}
 \gamma(a,x) \leq x^a/a.
\end{equation}
Now for the upper incomplete gamma, we have the following bound.
\begin{lemma}\label{lem:incGamma}
 If $\Gamma(a,x)= \int_{x}^{\infty}{e^{-t} t^{a-1}\, dt}$ where $x>2(a+1)$, then 
\begin{equation}
\Gamma(a, x)  < 2\exp(-x)x^{a+1}.
\end{equation}
\end{lemma}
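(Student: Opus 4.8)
The plan is to tame the integrand of $\Gamma(a,x)=\int_x^\infty t^{a-1}e^{-t}\,dt$ by peeling off half of the exponential weight. First I would write $e^{-t}=e^{-t/2}\,e^{-t/2}$ and set $g(t):=t^{a-1}e^{-t/2}$, so that $\Gamma(a,x)=\int_x^\infty g(t)\,e^{-t/2}\,dt$. The key observation is that $g$ is decreasing on the whole ray $[x,\infty)$: since $g'(t)=t^{a-2}e^{-t/2}\bigl((a-1)-t/2\bigr)$ and $t^{a-2}e^{-t/2}>0$, it is enough to know that $(a-1)-t/2<0$ for every $t\ge x$, and this holds because $t\ge x>2(a+1)$ forces $t/2>a+1>a-1$. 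Note this uses the hypothesis $x>2(a+1)$ on the full interval of integration, not just at its left endpoint.

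With $g$ nonnegative and non-increasing on $[x,\infty)$ we have $g(t)\le g(x)$ there, and hence
\[
\Gamma(a,x)=\int_x^\infty g(t)\,e^{-t/2}\,dt \le g(x)\int_x^\infty e^{-t/2}\,dt = x^{a-1}e^{-x/2}\cdot 2e^{-x/2} = 2\,x^{a-1}e^{-x}.
\]
Finally, the hypothesis $x>2(a+1)$ (with $a>0$, as in all our uses) gives $x>2>1$, so $x^{a-1}<x^{a+1}$ and therefore $\Gamma(a,x)<2e^{-x}x^{a+1}$, which is exactly the stated bound, in fact with two powers of $x$ to spare.

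I do not anticipate any real obstacle: the only step that deserves a second's care is verifying that $g$ decreases on all of $[x,\infty)$ rather than merely near $x$, which is precisely what the sign computation for $g'$ delivers, and the constant $2$ is nothing but $\int_x^\infty e^{-t/2}\,dt\big/e^{-x/2}$. The extra slack (we obtain $2e^{-x}x^{a-1}$, the lemma only asks for $2e^{-x}x^{a+1}$) is deliberately harmless and keeps the bookkeeping simple when the lemma is invoked in Sections~\ref{sec:distance} and~\ref{sec:volume}. An alternative route --- substituting $t=xu$, writing $\Gamma(a,x)=x^a\int_1^\infty u^{a-1}e^{-xu}\,du$, and bounding $u^{a-1}\le e^{(a-1)(u-1)}$ via $\ln u\le u-1$ --- also works, but it forces a case distinction on whether $a\ge 1$, so the split‑exponential argument above is cleaner.
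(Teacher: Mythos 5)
Your proof is correct, and it takes a different route from the paper: the paper disposes of this lemma in one line by citing an external result (it sets $\alpha=1$ and $d=2$ in Lemma~2.6 of the random-matrix condition-number reference), whereas you give a self-contained argument by splitting $e^{-t}=e^{-t/2}e^{-t/2}$, checking via the sign of $g'(t)=t^{a-2}e^{-t/2}\bigl((a-1)-t/2\bigr)$ that $g(t)=t^{a-1}e^{-t/2}$ is decreasing on all of $[x,\infty)$ (the hypothesis $x>2(a+1)$ gives much more than the needed $x\geq 2(a-1)$), and then bounding $\Gamma(a,x)\leq g(x)\int_x^\infty e^{-t/2}\,dt=2e^{-x}x^{a-1}$. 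This actually yields a sharper conclusion than stated, since $x^{a-1}\leq x^{a+1}$ once $x\geq 1$; your parenthetical appeal to $a>0$ here (so that $x>2(a+1)>1$) is a mild assumption not written in the lemma, but it matches every invocation in the paper, where $a$ is $d/2$ or $l/2$ with $d\geq 3$, so nothing is lost. What each approach buys: the paper's citation is shorter and offloads the calculation, while your argument makes the paper self-contained, exposes that only $x\geq 2(a-1)$ is really needed for the monotonicity step, and delivers the extra factor $x^{-2}$ of slack explicitly.
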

\begin{proof}
In \cite[Lemma~$2.6$]{condRandomMatrix} set $\alpha =1$ and $d=2$. 
\end{proof}
It is well-known~\cite[pp.~$220-235$]{book:FB_90} that the volume that is spanned by the convex hull of a $k$-point subset of $\RR^N$ along with the origin is equal to $\sqrt{\det( P^\top P)} / k!$, where $P$ is the $k\times N$ matrix that contains the points as columns. The following lemma gives a connection between the volume of the convex hull of $k$ points and the determinant of a specific matrix that is constructed using these points.
\begin{lemma}
Let $\mathcal{P}=\{p_1,p_2,\dots, p_k\}$ be an $k$-point subset of $\RR^N$  in general position and let $f:\RR^N \mapsto \RR^d$ be a linear mapping. Let $P:= [p_2 - p_1, p_3 -p_1 ,\dots ,p_k - p_1]$ be an $N\times (k-1)$ matrix. Then
\begin{equation}
 \frac{\vol (f(\mathcal{P}) )}{\vol (\mathcal{P})}= \left(\frac{\det \left( (FP)^\top FP\right)}{\det (P^\top P) } \right)^{1/2},
\end{equation}
where $F$ is the $d\times N$ matrix that corresponds to $f$.
\end{lemma}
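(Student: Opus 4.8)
The plan is to express each of the two volumes through the Gram--determinant formula quoted just above, after a translation, and then cancel the common factor $(k-1)!$.

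First I would reduce to the case $p_1 = 0$: since $(k-1)$-dimensional volume is invariant under translation of $\RR^N$, replacing every $p_i$ by $p_i - p_1$ changes neither $\vol(\mathcal{P})$ nor $\vol(f(\mathcal{P}))$, because $f$ is linear and hence $f(p_i) - f(p_1) = F(p_i - p_1)$, so the image point set is simply translated by $-f(p_1)$. After this reduction the convex hull of $\mathcal{P}$ is the simplex spanned by the origin together with the $k-1$ vectors $p_2 - p_1, \dots, p_k - p_1$, which are exactly the columns of $P$. Applying the stated fact, with the role of $k$ there played by $k-1$, gives
\[
 \vol(\mathcal{P}) = \frac{1}{(k-1)!}\sqrt{\det(P^\top P)}.
\]
Because $\mathcal{P}$ is in general position, the vectors $p_2 - p_1, \dots, p_k - p_1$ are linearly independent, so $P$ has full column rank and $\det(P^\top P) > 0$; in particular the ratio appearing in the statement is well defined.

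Next I would apply the same identity to the image set $f(\mathcal{P})$. As noted, $f(p_i) - f(p_1) = F(p_i - p_1)$ for every $i$, so the $d\times(k-1)$ matrix whose columns are the edge vectors of the image simplex emanating from $f(p_1)$ is precisely $FP$. Hence
\[
 \vol(f(\mathcal{P})) = \frac{1}{(k-1)!}\sqrt{\det\big((FP)^\top FP\big)},
\]
and this remains correct even when $FP$ is rank-deficient, both sides then vanishing as the image lies in a flat of dimension smaller than $k-1$. Dividing the last two displayed equalities, the factor $1/(k-1)!$ cancels and we are left with $\vol(f(\mathcal{P}))/\vol(\mathcal{P}) = \big(\det((FP)^\top FP)/\det(P^\top P)\big)^{1/2}$, which is the assertion. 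I do not anticipate any real obstacle: the substantive input is the quoted parallelepiped/simplex volume formula, and the only care needed is to invoke translation invariance so that this formula applies and to use the general-position hypothesis to keep the denominator strictly positive.
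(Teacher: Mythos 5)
Your proof is correct and follows essentially the same route as the paper's: translate so that $p_1$ is at the origin, apply the Gram-determinant volume formula to $P$ and to $FP$ (noting $f(p_i)-f(p_1)=F(p_i-p_1)$ by linearity), and cancel the common factorial, with general position guaranteeing $\det(P^\top P)>0$. Your write-up is in fact slightly more careful than the paper's, which cites the formula with a $k!$ where $(k-1)!$ is the correct normalization for the $k-1$ edge vectors (harmless, since the factor cancels).
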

\begin{proof}
By a translation of the point-set $\Pset$, i.e., identifying $p_1$ with the origin, it follows that $\vol (\mathcal{P})= \sqrt{\det( P^\top P)}/k!$, since the volume is translation invariant, and similarly $\vol (f(\mathcal{P}))= \sqrt{\det( (FP)^\top FP)}/k!$. Since $\mathcal{P}$ is in general position, it follows that
\[\frac{\vol (f(\mathcal{P}) )}{\vol (\mathcal{P})}= \left(\frac{\det \left( (FP)^\top FP\right)}{\det (P^\top P) } \right)^{1/2}.\]
\end{proof}
Now, let's consider the above lemma in the setting where $f$ is a random linear mapping. More specifically, let $F$ be a Gaussian matrix, i.e., a matrix whose entries are i.i.d. Gaussian $\mathcal{N}(0,1)$. First observe that the fraction of the volumes is a random variable. Surprisingly enough, as the following lemma states, the fraction of the volumes in this setting is \emph{independent} of $\Pset$. This can be thought of as a generalization of the $2$-stability property of inner products with Gaussian random vectors to matrix multiplication with Gaussian matrices.
\begin{lemma}\label{lem:vol_stability}
Let $\mathcal{P}=\{p_1,p_2,\dots, p_k\}$ be an $k$-point subset of $\RR^N$ in general position. And let $f:\RR^N \mapsto \RR^d$ be a random Gaussian linear mapping. Then
\begin{equation}\label{eq:frac_volumes}
 \left(\frac{\vol (f(\mathcal{P}) )}{\vol (\mathcal{P})} \right)^2 \sim \prod_{i=1}^{k - 1}{\chi^2_{d - i + 1}}.
\end{equation}
\end{lemma}
\begin{proof}
It is a simple consequence of \cite[Lemma~$3$]{MZ08} and the above lemma.
\end{proof}
\begin{remark}\label{rem:distance}
For $k=2$ in Lemma~\ref{lem:vol_stability}, we get $\|f(p_1)-f(p_2)\|^2/\|p_1-p_2\|^2 \sim \chi_d^2$. 
\end{remark}

Equation~\ref{eq:frac_volumes} gives the distribution of the fraction of the volume as a product of independent random variables. However, in general it's difficult to deal with such a product, and so we employ the following theorem that sandwiches this product with a single Chi-square distributions.
\begin{theorem}[Theorem~4,~\cite{gordon}]\label{thm:gordon}
Let $u_i:=\chi^2_{d-i+1}$ be independent Chi-square random variables for $i=1,2, \dots , s$. Then the following holds for every $s\geq 1$, 
\begin{equation}
\chi^2_{s(d-s+1) +\frac{(s-1)(s-2)}{2}} \succeq s \left(\prod_{i=1}^{s}{u_i}\right)^{1/s} \succeq \chi^2_{s(d-s+1)} .
\end{equation}
\end{theorem}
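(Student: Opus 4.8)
The plan is to prove both dominances at the level of real (Mellin) moments and then lift the resulting moment inequalities to stochastic order. Write $W:=s\big(\prod_{i=1}^s u_i\big)^{1/s}$. Since $x\mapsto\log x$ is increasing, $X\preceq Y$ is equivalent to $\log X\preceq\log Y$, so it is natural to study $\log W=\log s+\frac1s\sum_{i=1}^s\log u_i$, a sum of independent log-gamma terms. Using the fractional moments $\EE[(\chi^2_m)^q]=2^q\Gamma(m/2+q)/\Gamma(m/2)$ together with independence, I would first record, for every real $p$,
\[
\EE[W^p]=(2s)^p\prod_{i=1}^s\frac{\Gamma\big(\tfrac{d-i+1}{2}+\tfrac{p}{s}\big)}{\Gamma\big(\tfrac{d-i+1}{2}\big)},
\]
and compare it with the moments $2^p\Gamma(D/2+p)/\Gamma(D/2)$ of $\chi^2_D$ for the two targets $D_-=s(d-s+1)$ and $D_+=s(d-s+1)+\tfrac{(s-1)(s-2)}{2}$.

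The heart of the argument is the Gauss multiplication formula $\prod_{k=0}^{s-1}\Gamma(z+\tfrac{k}{s})=(2\pi)^{(s-1)/2}s^{1/2-sz}\Gamma(sz)$, which collapses a product of $s$ Gamma values whose arguments form an arithmetic progression of step $\tfrac1s$ into a single $\Gamma$; after the prefactor $(2s)^p$ this turns such a product of chi-square moments exactly into the $p$-th moment of a single $\chi^2_{2sz}$. The arguments $\tfrac{d-i+1}{2}$ appearing above instead form a progression of step $\tfrac12$. Writing $h(a):=\log\Gamma(a+\tfrac{p}{s})-\log\Gamma(a)$, the digamma $\psi=(\log\Gamma)'$ being increasing and $\psi'$ decreasing makes $h$ increasing and concave for $p>0$ (and decreasing and convex for $p<0$). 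I would then compare $\sum_i h(a_i)$ over the true step-$\tfrac12$ nodes against an ideal step-$\tfrac1s$ progression: anchoring the ideal progression at the same smallest node, so that it lies pointwise below the true nodes, and using monotonicity of $h$ gives the $D_-$ estimate ($z=\tfrac{d-s+1}{2}$, so $2sz=s(d-s+1)$); anchoring it instead at the same mean and using that the true, more spread-out nodes majorize the ideal ones — so they lower the concave sum, by Karamata's inequality — gives the $D_+$ estimate, the term $\tfrac{(s-1)(s-2)}{2}$ being exactly the resulting offset. For $s=2$ the two steps coincide, the two bounds collapse, and Legendre's duplication formula yields the exact identity $2\sqrt{\chi^2_d\,\chi^2_{d-1}}\sim\chi^2_{2(d-1)}$, a reassuring special case.

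These computations give, for $D_-$, the inequalities $\EE[W^p]\ge\EE[(\chi^2_{D_-})^p]$ for $p\ge0$ with the reverse inequality for $p\le0$, and the mirror-image statement for $D_+$; these are precisely the moment conditions \emph{necessary} for $W\succeq\chi^2_{D_-}$ and $W\preceq\chi^2_{D_+}$. The main obstacle is that such two-sided moment (equivalently, Mellin-transform) domination does not by itself imply stochastic dominance, so a genuine comparison of distribution functions is still required. I would close this gap with total positivity: the Gamma densities form a P\'olya frequency family, the density of the product $W$ is a Meijer $G$-function governed by the variation-diminishing property, and the sign pattern of the moment differences forces each difference $f_W-f_{\chi^2_{D_\pm}}$ to change sign at most twice. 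Hence the corresponding distribution functions cross at most once, and a single matched moment then pins the direction of the crossing, delivering $\Prob{\chi^2_{D_+}\le t}\le\Prob{W\le t}\le\Prob{\chi^2_{D_-}\le t}$ for all $t$, which is exactly the asserted sandwich. Making this variation-diminishing step fully rigorous, rather than leaning on the mean/moment heuristic, is where the real work lies.
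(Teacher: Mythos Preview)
The paper does not prove this statement at all: it is quoted as Theorem~4 of \cite{gordon} and used as a black box in Section~\ref{sec:volume}. So there is no in-paper argument to compare your proposal against; the relevant benchmark is Gordon's original proof.

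On its own merits, your moment computation is correct and elegant. The identity $(2s)^p\prod_{k}\Gamma(b_k+p/s)/\Gamma(b_k)=\EE[(\chi^2_{2sz})^p]$ for a step-$1/s$ grid $b_k=z+k/s$ follows exactly from the Gauss multiplication formula as you say, and the monotonicity/Karamata comparison of the true step-$\tfrac12$ nodes $(d-i+1)/2$ against the two anchored step-$1/s$ grids is sound; both offsets $D_\pm$ come out as claimed, and the $s=2$ check via Legendre duplication is a genuine identity in law. What you have actually established is therefore two-sided \emph{Mellin-transform ordering}: $\EE[W^p]$ lies between $\EE[(\chi^2_{D_-})^p]$ and $\EE[(\chi^2_{D_+})^p]$ with the sign governed by $\mathrm{sgn}\,p$.

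The gap is exactly where you place it, and your sketch for closing it does not work as stated. The variation-diminishing property is a statement about a \emph{kernel} $K(x,y)$ that is totally positive in its two arguments: integrating a function against such a kernel cannot increase its number of sign changes. It is not a property of a single density, so saying ``the density of $W$ is a Meijer $G$-function governed by the variation-diminishing property'' does not by itself constrain the sign changes of $f_W-f_{\chi^2_{D_\pm}}$; nor does a sign pattern of Mellin moments of the difference translate into a bound on its sign changes without precisely such a kernel in play. To push this route through you would need to embed both $W$ and $\chi^2_{D_\pm}$ in a common one-parameter family whose densities are $\mathrm{TP}_2$ in (parameter, $t$) and then argue single-crossing of the CDFs along that family --- a nontrivial construction that your proposal does not supply. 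Absent that, the moment inequalities, however sharp, do not deliver the stochastic sandwich; this is the step where you should consult Gordon's paper for the intended mechanism rather than rely on the heuristic.
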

We now have enough tools at our disposal to prove Theorem~\ref{thm:low-dim-vol}.
\section{Distance Distortion}\label{sec:distance}
In this section we prove the following
\begin{theorem}\label{thm:mat90:distances}
Let $\Pset$ be a $n$-point subset of $\RR^N$ and let $3\leq d \leq c_1 \log n$, where $c_1$ is a positive constant. Then there exists a linear mapping $f:\Pset \mapsto \RR^d$ with (distance) distortion $\text{dist}(f)=\OO(n^{2/d}\sqrt{\log n/d})$, i.e., there exists an absolute constant $c>0$ such that 
\[\forall x,y \in \Pset, \qquad \|x-y\| \leq \|f(x)-f(y)\| \leq cn^{2/d}\sqrt{\log n/d}\|x-y\|. \]
\end{theorem}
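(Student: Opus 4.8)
The plan is to use a random Gaussian linear mapping $f:\RR^N\mapsto\RR^d$ and show that, with positive probability after rescaling, all $\binom{n}{2}$ pairwise distances are distorted by at most the claimed factor. First I would fix a scaling: the map $f$ will be $g/\alpha$ where $g$ corresponds to a $d\times N$ Gaussian matrix $G$ and $\alpha>0$ is a normalizing constant to be chosen. By Remark~\ref{rem:distance}, for any pair $x,y\in\Pset$ we have $\|g(x)-g(y)\|^2/\|x-y\|^2 \sim \chi^2_d$; crucially this distribution does not depend on the pair, so it suffices to control a single $\chi^2_d$ variable from both sides.

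The two tail bounds I need are: (i) a lower-tail bound $\Prob{\chi^2_d \le t_{\text{low}}}$ small enough that a union bound over $\binom{n}{2}$ pairs still leaves room, which pins down $\alpha$ (roughly $\alpha^2 \asymp t_{\text{low}}$ so that the lower bound $\|f(x)-f(y)\|\ge\|x-y\|$ holds for every pair); and (ii) an upper-tail bound $\Prob{\chi^2_d \ge t_{\text{high}}}$, again beating $\binom{n}{2}$, so that $\|f(x)-f(y)\|\le \sqrt{t_{\text{high}}/t_{\text{low}}}\,\|x-y\|$. I would derive both tails from the cumulative distribution function $\Prob{\chi^2_d\le t}=\gamma(d/2,t/2)/\Gamma(d/2)$ together with the inequalities already recorded in Section~\ref{sec:defn}: the bound $\gamma(a,x)\le x^a/a$ in~\eqref{eq:incgamma} gives the lower tail, and Lemma~\ref{lem:incGamma} (upper incomplete Gamma, valid when $x>2(a+1)$) together with the Stirling bounds of Lemma~\ref{lem:Gamma} on $\Gamma(d/2)$ gives the upper tail. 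Writing $a=d/2$ and $x=t/2$, the lower tail behaves like $(et/d)^{d/2}e^{-t/2}$-type expressions, so choosing $t_{\text{low}}\asymp d\,n^{-4/d}$ makes $\Prob{\chi^2_d\le t_{\text{low}}} \ll n^{-2}$; similarly choosing $t_{\text{high}}\asymp d\log n$ (safely above the $2(a+1)$ threshold once $d\le c_1\log n$) makes $\Prob{\chi^2_d\ge t_{\text{high}}}\ll n^{-2}$. Then the resulting distortion is $\sqrt{t_{\text{high}}/t_{\text{low}}} \asymp \sqrt{d\log n/(d\,n^{-4/d})} = n^{2/d}\sqrt{\log n}$, but since we want the stated $\sqrt{\log n/d}$ I would instead take $t_{\text{high}}\asymp \log n$ (not $d\log n$) — here the constraint $d\le c_1\log n$ with $c_1$ small is exactly what keeps $\log n$ above the $2(d/2+1)$ threshold so Lemma~\ref{lem:incGamma} still applies and the tail is still summable, yielding distortion $\OO(n^{2/d}\sqrt{\log n/d})$.

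Finally I would assemble the union bound: the probability that some pair violates either the lower or the upper bound is at most $\binom{n}{2}\big(\Prob{\chi^2_d\le t_{\text{low}}}+\Prob{\chi^2_d\ge t_{\text{high}}}\big)<1$ for $n$ larger than an absolute constant, so a good map $f=g/\sqrt{t_{\text{low}}}$ exists; small $n$ are handled trivially by absorbing them into the constant $c$.

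The main obstacle I anticipate is the bookkeeping in the lower-tail estimate: getting the exponent of $n^{2/d}$ exactly right requires carefully balancing the $x^a/a$ term against the $\Gamma(d/2)$ normalization via Stirling, and simultaneously verifying that the chosen $t_{\text{high}}$ stays in the range $x>2(a+1)$ where Lemma~\ref{lem:incGamma} is valid — this is precisely where the hypothesis $d\le c_1\log n$ is consumed, and choosing $c_1$ too generously would break the upper-tail bound. Everything else is a routine union bound over pairs.
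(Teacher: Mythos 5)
Your proposal follows essentially the same route as the paper's own proof: a random Gaussian map reduces each pair to a single $\chi^2_d$ variable via Remark~\ref{rem:distance}, the lower tail is controlled through $\gamma(a,x)\le x^a/a$ and Stirling with threshold $t_{\text{low}}\asymp d\,n^{-4/d}$ (the paper's $a=c_2\sqrt{d}/n^{2/d}$), the upper tail through Lemma~\ref{lem:incGamma} with threshold $\Theta(\log n)$ (the paper's $b=\Theta(\sqrt{\log n})$, valid since $d\le c_1\log n$ keeps the incomplete-Gamma condition satisfied), and a union bound over $\binom{n}{2}$ pairs plus rescaling yields distortion $\OO(n^{2/d}\sqrt{\log n/d})$. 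Your mid-course correction from $t_{\text{high}}\asymp d\log n$ to $t_{\text{high}}\asymp\log n$ lands exactly on the paper's choice, so the plan is correct and matches the paper's argument.
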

\begin{proof}
Similarly as in~\cite{mat90}. Consider the random linear map $f:\RR^N \to \RR^d$, $f(x):=R\cdot x$ where $R$ is an $d\times N$ random Gaussian matrix. Using linearity of $f$ and Remark~\ref{rem:distance} it follows that $\|f(x)-f(y)\|^2/\|x-y\|^2\sim \chi_d^2$ for any $x,y\in{\Pset}$. Our goal is to show that $\chi_d^2$ is sufficiently concentrated. More specifically, it suffices to show that $\chi_d^2$ doesn't fall outside an interval $[a,b]$ for some $a,b\in\RR$ with constant probability. This aims to upper bound the probabilities $Pr[\chi_d^2 \leq a^2]$ and  $Pr[\chi_d^2 \geq b^2]$.
\par The elements of $\Pset$ determine at most $\binom{n}{2}$ distinct direction vectors. Applying union bound over all pairs of $\Pset$ gives that if 
\begin{equation}\label{ineq:distance:unionbound}
 \binom{n}{2} \left( \Prob{\chi_d^2 \leq a^2} + \Prob{\chi_d^2 \geq b^2} \right) < 1,
\end{equation}
then there exists $f$ that expands every distance in $\Pset$ by at most $b$ times and contracts at least $a$ times, so $\rm{dist}(f)\leq b/a$. Our goal therefore is to specify $a,b$ in terms of $d$ and $n$ such that Inequality~\ref{ineq:distance:unionbound} holds.
To do so, we first bound $\Gamma(d/2)$ from below, which will be used later. By Lemma~\ref{lem:Gamma}, we have that $\Gamma (d/2) \geq  e^{- d/2}(d-2)^{(d-1)/2} /2^{d/2}.$
\ignorecalc{\begin{eqnarray*}
\Gamma (d/2) &>& \sqrt{2\pi} (d/2-1)^{d/2 -1+1/2} e^{-(d/2-1)} \\
\ignorecalc{& >& \sqrt{2\pi} (\frac{d-2}{2})^{d/2 -1/2} e^{- d/2} \\
& > & \sqrt{2\pi} \frac{e^{- d/2}(d-2)^{(d-1)/2}}{2^{d/2-1/2}} \\}
& > & \frac{e^{- d/2}(d-2)^{(d-1)/2} }{2^{d/2}} .
\end{eqnarray*}
}
Now, we will bound $a,b$ separately. We find $a$ such that $\binom{n}{2} \Prob{\chi_d^2 \leq a^2} < 1/2$. Using Equation~\ref{eq:incgamma} and the previous analysis we require that $\frac{n^2}{2} \frac{a^d}{e^{-d/2} (d-2)^{(d-1)/2}} < 1/2$, which holds for all $d\geq 3$ if we set $a=c_2\sqrt{d}/n^{2/d}$, where $c_2>0$ is an absolute constant.
\ignore{
Which holds if 
\begin{eqnarray*}
a &<& \frac{e^{-1/2} (d-2)^{\frac{d-1}{2d}}}{n^{2/d}} \\
&<&  \frac{(d-2)^{1/2}}{n^{2/d}} \\
&=& \OO(\frac{\sqrt{d}}{n^{2/d}}).
\end{eqnarray*}
}
Similarly, we will find $b$ such that $\binom{n}{2} \Prob{\chi_d^2 \geq b^2} < 1/2$. Using Lemma~\ref{lem:incGamma}, and assume for the moment that $b^2 > 2d-2 $, we have that
\ignorecalc{\begin{eqnarray*}
\Prob{\chi_d^2 \geq b^2} &\leq &  \frac{e^{-b^2/2} (b^2/2)^{d/2-1}}{\Gamma(d/2)} \\
\ignorecalc{&\leq & \frac{2e^{-b^2/2} b^{d-2} }  {2^{d/2} e^{d/2} (d-2)^{(d-1)/2}} \\}
&\leq & \frac{b^{d-2} e^{-b^2/2 - d/2}}{(d-2)^{(d-1)/2}}. 
\end{eqnarray*}
}
\begin{equation*}
\Prob{\chi_d^2 \geq b^2}\ \leq\  \frac{e^{-b^2/2} (b^2/2)^{d/2-1}}{\Gamma(d/2)}\ \leq\ \frac{b^{d-2} e^{-b^2/2 - d/2}}{(d-2)^{(d-1)/2}}.
\end{equation*}
It suffices to show that $\ln \left(n^2 \frac{b^{d-2} e^{-b^2/2 - d/2}}{(d-2)^{(d-1)/2}} \right)$ is negative for large enough $n$. Indeed,
\begin{eqnarray*}
\ln \left(n^2 \frac{b^{d-2} e^{-b^2/2 - d/2}}{(d-2)^{(d-1)/2}} \right) &< & 2\ln n + (d-2)\ln b  - b^2/2 -d/2 - \frac{d-1}{2} \ln (d-2).
\end{eqnarray*}
Note that if $d>d'$ then $\Prob{\chi_{d'}^2 \geq b^2} \leq \Prob{\chi_d^2 \geq b^2} $. Thus we can assume that $d=c_1 \log n$, since if we can bound it, then we can bound it for all fixed $d< c_1 \log n$. Define $g(b,n)=2\ln n + (d-2)\ln b  - b^2/2 -d/2 - \frac{d-1}{2} \ln (d-2)$. We want to show that $g(b,\ n)<0$ for large enough $n$. By choosing $b=5c_1 \sqrt{\log n}$, and recall that $d=c_1 \log n$ hence $b^2 >2d-2$, we conclude that $\lim_{n\to \infty}g(5\sqrt{\ln n},\ n)= -\infty$ as desired.
Hence, we can choose $a,b$ functions of $n$ such that $b/a = \frac{5c_1\sqrt{\log n} }{c_2\sqrt{d}/n^{2/d}} =cn^{2/d} \sqrt{\log n /d}$.
\end{proof}
\section{Proof of Main Theorem}\label{sec:volume}
Our goal is to find a mapping $f:\Pset \to \RR^d$ such that
\begin{equation}\label{eq:low-volume-constraints}
\forall S\subset \Pset, |S|\leq k \quad   1 \leq \left(\frac{\vol(f(S))}{\vol (S)} \right)^{\frac{1}{|S| - 1}} \leq D,
\end{equation}
where $D$ is the volume distortion of the mapping. We will see in the analysis below that we can set $k=\lfloor d/2 \rfloor$ and $D=\OO(n^{2/d}\sqrt{\log n\log\log n})$. We can assume w.l.o.g. that the input points are in general position, i.e., every subset of size up to $k$ is affinely independent. If not, both the original points and projected points will span zero volume.

Similarly with Section~\ref{sec:distance}, we take a random $f$ using a Gaussian random matrix and show that it satisfies~(\ref{eq:low-volume-constraints}) with constant probability. To do so, we first bound the probability that a fixed subset ``contracts'' its' volume by more than a factor $a$.
\begin{lemma}
Fix any subset $S\subset \Pset$ of size $|S|=s+1$ with $1\leq s < k$. Then
\[\Prob{ \left(\frac{\vol(f(S))}{\vol(S)}\right)^{\frac{1}{|S|-1}} \leq a}\ \leq\ \frac{(esa^2)^{t/2}}{t(t-2)^{(t-1)/2}}, \]
where $t=s(d-s+1)$.
\end{lemma}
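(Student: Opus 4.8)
The plan is to translate the volume event into a statement about a single Chi-square random variable and then estimate its lower tail through the incomplete Gamma function. Write $|S|=s+1$, so that $|S|-1=s$. By Lemma~\ref{lem:vol_stability} applied to $S$ we have $\left(\vol(f(S))/\vol(S)\right)^2 \sim \prod_{i=1}^{s} \chi^2_{d-i+1}$, and hence the event in question, $\left(\vol(f(S))/\vol(S)\right)^{1/s} \leq a$, is exactly the event $\left(\prod_{i=1}^{s} u_i\right)^{1/s} \leq a^2$ where $u_i := \chi^2_{d-i+1}$ are independent. Multiplying through by $s$, this is the event $s\left(\prod_{i=1}^{s} u_i\right)^{1/s} \leq s a^2$.

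Next I would apply the lower estimate of Theorem~\ref{thm:gordon}, which gives $s\left(\prod_{i=1}^{s} u_i\right)^{1/s} \succeq \chi^2_t$ with $t=s(d-s+1)$. Unwinding the definition~(\ref{eq:rv:ordering}) of the stochastic order, $X \succeq Y$ means $\Prob{X \leq x} \leq \Prob{Y \leq x}$ for every $x$; taking $x = sa^2$ yields
\[
\Prob{\left(\frac{\vol(f(S))}{\vol(S)}\right)^{1/s} \leq a}\ \leq\ \Prob{\chi^2_t \leq s a^2}.
\]
Thus it remains only to bound the lower tail of a $\chi^2_t$ variable.

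For that final step I would use the closed form $\Prob{\chi^2_t \leq x} = \gamma(t/2,x/2)/\Gamma(t/2)$ recalled in Section~\ref{sec:defn} together with the elementary bound $\gamma(a,x) \leq x^a/a$ from~(\ref{eq:incgamma}), obtaining $\Prob{\chi^2_t \leq sa^2} \leq (sa^2/2)^{t/2}/\big((t/2)\Gamma(t/2)\big)$. Then I would lower-bound $\Gamma(t/2) = \Gamma\big((t/2-1)+1\big)$ via the Stirling estimate of Lemma~\ref{lem:Gamma}, which gives $\Gamma(t/2) > \sqrt{2\pi}\, e\, 2^{-(t-1)/2} (t-2)^{(t-1)/2} e^{-t/2}$; substituting this and collecting the powers of $2$, $e$ and $t$ turns the bound into $(esa^2)^{t/2}/\big(\sqrt{\pi}\, e\, t\, (t-2)^{(t-1)/2}\big)$, which is at most $(esa^2)^{t/2}/\big(t(t-2)^{(t-1)/2}\big)$ since $\sqrt{\pi}\, e > 1$. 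Here $t-2>0$ is never an issue, since $s \geq 1$ and $s < k \leq \lfloor d/2\rfloor$ force $t = s(d-s+1) \geq d \geq 3$.

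The conceptual content is entirely in the first two paragraphs: the stability identity of Lemma~\ref{lem:vol_stability}, which turns the volume ratio into a product of independent Chi-squares, and Gordon's inequality, which collapses that product into a single Chi-square. The only place that needs care is the arithmetic of the last paragraph — one must faithfully track the factors of $2$, $e$ and $t$ produced by Stirling's bound and verify that they combine to leave precisely the stated constant-free expression rather than one off by a multiplicative constant.
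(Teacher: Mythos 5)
Your proposal is correct and follows essentially the same route as the paper's proof: Lemma~\ref{lem:vol_stability} to express the squared volume ratio as a product of independent Chi-squares, the lower estimate of Theorem~\ref{thm:gordon} to reduce to the lower tail of a single $\chi^2_t$ with $t=s(d-s+1)$, and then the bounds $\gamma(a,x)\leq x^a/a$ and Lemma~\ref{lem:Gamma} on $\Gamma(t/2)$. The only difference is that you carry out the final Stirling arithmetic explicitly (arriving at the extra harmless factor $\sqrt{\pi}\,e>1$), which the paper leaves implicit.
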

\begin{proof}
Using Lemma~\ref{lem:vol_stability} we know that the above probability is equal to $\Prob{ \left(\prod_{i=1}^{s} \chi_{d-i+1}^2\right)^{1/s} \leq  a^2}$. Using Theorem~\ref{thm:gordon}, we can bound the above probability of product of Chi-square random variables with a single Chi-square. More specifically, using the stochastic ordering we have the following inequality
\[\Prob{ \left(\prod_{i=1}^{s} \chi_{d-i+1}^2\right)^{1/s} \leq  a^2} \leq \Prob{\chi_{s(d-s+1)}^2 \leq s\cdot a^2}\]
for every $1\leq s < k$.
Now, we have a single Chi-square random variable and thus we can bound it from above, the same way as we did in Section~\ref{sec:distance}, using Lemma~(\ref{lem:Gamma}) and Equation (\ref{eq:incgamma}). It follows that $\Prob{\chi_{t}^2 \leq s\cdot a^2}\ =\ \frac{\gamma (t/2,sa^2/2)}{\Gamma (t/2)}\ \leq\ \frac{(esa^2)^{t/2}}{t(t-2)^{(t-1)/2}}$.
\end{proof}
Similarly, we bound the probability that a fixed subset ``expands'' it's volume by more than a factor $b$.
\begin{lemma}\label{lem:volume_expansion}
Fix any subset $S\subset \Pset$ of size $|S|=s+1$ with $1\leq s < k$. If $sb^2 > 2l+4$, then
\begin{equation*}
\Prob{\left(\frac{\vol(f(S))}{\vol(S)}\right)^{\frac{1}{|S|-1}} \geq b}\ \leq\ \frac{e^{-\frac{sb^2-l}{2}}(sb^2)^{l/2+1} }{ (l-2)^{(l-1)/2}},
\end{equation*}
where $l=s(d-s+1)+\frac{(s-1)(s-2)}{2}$.
\end{lemma}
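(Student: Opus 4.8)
The plan is to mirror the proof of the preceding contraction lemma, reading Theorem~\ref{thm:gordon} from its \emph{upper} side and bounding an upper tail rather than a lower tail of a single Chi-square. First I would invoke Lemma~\ref{lem:vol_stability}: since $\left(\vol(f(S))/\vol(S)\right)^2 \sim \prod_{i=1}^{s}\chi^2_{d-i+1}$ with $s=|S|-1$, the probability in question equals $\Prob{\left(\prod_{i=1}^{s}\chi^2_{d-i+1}\right)^{1/s} \geq b^2}$.

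Next I would apply the left inequality of Theorem~\ref{thm:gordon}, namely $\chi^2_{l} \succeq s\left(\prod_{i=1}^{s}u_i\right)^{1/s}$ with $u_i=\chi^2_{d-i+1}$ and $l=s(d-s+1)+\frac{(s-1)(s-2)}{2}$. Unwinding the definition of $\preceq$ in~(\ref{eq:rv:ordering}), this gives $\Prob{s\left(\prod_{i=1}^{s}u_i\right)^{1/s} \geq x} \leq \Prob{\chi^2_l \geq x}$ for every $x$; taking $x=sb^2$ reduces the task to bounding the single upper tail $\Prob{\chi^2_l \geq sb^2} = \Gamma(l/2,\,sb^2/2)/\Gamma(l/2)$.

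Finally I would estimate numerator and denominator exactly as in Section~\ref{sec:distance}. For the numerator, Lemma~\ref{lem:incGamma} with $a=l/2$ and $x=sb^2/2$ applies precisely because the stated hypothesis $sb^2>2l+4$ is equivalent to $x>2(a+1)$, yielding $\Gamma(l/2,\,sb^2/2) < 2\exp(-sb^2/2)(sb^2/2)^{l/2+1}$. For the denominator, the Stirling lower bound of Lemma~\ref{lem:Gamma}---the same computation that produced $\Gamma(d/2)\geq e^{-d/2}(d-2)^{(d-1)/2}/2^{d/2}$, now with $l$ in place of $d$, which is legitimate since $l\geq d\geq 3$---gives $\Gamma(l/2)\geq e^{-l/2}(l-2)^{(l-1)/2}/2^{l/2}$. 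Dividing, the factors $2^{l/2}$ cancel and $e^{-l/2}$ combines with $e^{-sb^2/2}$ into $e^{-(sb^2-l)/2}$, leaving exactly $\frac{e^{-(sb^2-l)/2}(sb^2)^{l/2+1}}{(l-2)^{(l-1)/2}}$, as claimed.

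I do not expect a deep obstacle here; the only points requiring care are the \emph{direction} of the stochastic ordering in Theorem~\ref{thm:gordon} (it must be read so that the product is dominated from \emph{above} by $\chi^2_l$, not by $\chi^2_{s(d-s+1)}$), and the observation that the side condition $sb^2>2l+4$ is exactly the inequality needed to invoke Lemma~\ref{lem:incGamma}. One should also note in passing that $l\geq d\geq 3$ throughout (as $s\geq 1$ forces $s(d-s+1)\geq d$ on the relevant range of $s$), so that $(l-2)^{(l-1)/2}$ and the Stirling estimate are well defined.
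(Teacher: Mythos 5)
Your proposal is correct and follows the same route as the paper: reduce to the product of Chi-squares via Lemma~\ref{lem:vol_stability}, dominate it from above by $\chi^2_l$ using the upper side of Theorem~\ref{thm:gordon}, and then bound $\Gamma(l/2,sb^2/2)/\Gamma(l/2)$ via Lemma~\ref{lem:incGamma} and the Stirling lower bound of Lemma~\ref{lem:Gamma}. The paper's proof is just a terser version of this; your added checks (that $sb^2>2l+4$ is exactly the hypothesis of Lemma~\ref{lem:incGamma} with $a=l/2$, $x=sb^2/2$, and that $l\geq d\geq 3$) are accurate.
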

\begin{proof}
As in the previous lemma the above probability is equal to $\Prob{ \left(\prod_{i=1}^{s} \chi_{d-i+1}^2\right)^{1/s}\geq b^2}$, and again using Theorem~\ref{thm:gordon} it follows that
\begin{equation*}
	\Prob{ \left(\prod_{i=1}^{s} \chi_{d-i+1}^2\right)^{1/s} \geq b^2} \leq \Prob{\chi_{s(d-s+1)+ \frac{(s-1)(s-2)}{2}}^2 \geq s\cdot b^2}:=E_{d,s}.
\end{equation*}
Using Lemmas~\ref{lem:Gamma},~\ref{lem:incGamma} it follows that $\Prob{\chi_{l}^2 \geq s\cdot b^2}\ =\ \frac{\Gamma(l/2,sb^2/2)}{\Gamma(l/2)}\ \leq\ \frac{e^{-\frac{sb^2 - l}{2}}(s b^2)^{l / 2 + 1 } }{(l - 2)^{(l - 1) / 2}}$. 
\end{proof}
Notice that if $d'>d$, then $E_{d,s}\leq E_{d',s}$ from the stochastic ordering of the Chi-square distribution. Now we are ready to apply union bound. Our goal is to find $a$ such that with probability at least $1/2$, our embedding does not contract volumes of subsets of size up to $k$ by a factor $a$.

By union bounding over all sets of fixed size $i$, $1\leq i \leq k$, we want to find $a$ such that 
\begin{eqnarray*}
\binom{n}{i+1} \frac{(eia^2)^{t_i/2}}{t_i(t_i-2)^{(t_i-1)/2}} < \frac1{2k},
\end{eqnarray*}
where $t_i=i(d-i+1)$. Note that if we sum over all different size of subsets ($i=1,\dots ,k$) we get that the failure probability is at most $1/2$. It suffices to show that $\ln \left(2k \binom{n}{i+1}  \frac{(eia^2)^{t_i/2}}{t_i(t_i-2)^{(t_i-1)/2}}\right)$ is negative for large enough $n$ and for every $1\leq i\leq k$ and $d\geq 3$, or equivalently the following is negative
\begin{eqnarray*}
\ln 2 +\ln k +(i+1)\ln n +t_i\ln a &+& (t_i/2-i)\ln i + (t_i/2 + i)-\ln t_i - (\frac{t_i-1}{2})\ln (t_i-2).
\end{eqnarray*}
Let's group the terms of the right hand size and bound them individually. It is not hard to see that $(t_i/2-i)\ln i - (\frac{t_i-1}{2})\ln (t_i-2) < 0$ and $\ln k - \ln t_i \leq  0$ since $k \leq d \leq t_i$ and $t_i =i(d-i+1)$, when $i=1,\dots , k$ and for $d\geq 3$. Hence, it suffices to show that
\begin{eqnarray*}
\ln 2 +(i+1)\ln n +t_i\ln a  + (t_i/2 + i) &<& 0.
\end{eqnarray*}
Set $a=c_e n^{-\gamma}$, for some positive $\gamma$ that will be specified shortly and $c_e$ a sufficient small positive constant. Recall that we want the above inequality to hold for every $1\leq i \leq k$. We can choose $c_e$ smaller than $e^{-1}$ and take care of the $t_i/2+i+\ln 2$ term. Lets now focus on the dominate term $(i+1)\ln n$. It follows that the above quantity is negative if $\gamma\ \geq\ \frac{i+1}{i(d-i+1)},\ \text{for all }i=1,\dots, k$.
Let's study closer the function $h_d(x)=\frac{x+1}{x(d-x+1)}$. We will show that $h_d(x)$ is convex on the domain $[1,d/2]$ and also increasing in the domain $[d/4,d]$ for any fixed $d\geq 3$.
\ignore{The first and second derivatives of $h_d(x)$ are :
\begin{eqnarray*}
 h_d'(x) &=& \frac{x^2+2x-d-1}{x^2(d-x+1)^2}\\
 h_d''(x) &=& \frac{2(x^3+3x^2-3dx -3x +d^2 +2d +1)}{x^3(d-x+1)}.
\end{eqnarray*}
}
A simple calculation shows that $h_d''(x) > 0$ for $x\in{[1,d]}$ and $h_d'(x)>0$ for $x\in{[\frac{d}{4},d]}$ (details omitted). Also note that $h_d(1)=h_d(d/2)=2/d$. By convexity in $[1,d/2]$, we get that $h_d(x)\leq 2/d$ for all $x\in{[1,d/2]}$.
\par The above analysis gives a bound on the parameter $k$, i.e., the maximum size of subsets that we can consider. Thus, we get that $k$ should be less than or equal to $\lfloor d/2\rfloor$.
\par To sum up, we have proved that if $a=c_en^{-2/d}$ then with probability at least $1/2$ our embedding doesn't contract the normalized volumes of subsets of size at most $\lfloor d/2\rfloor$ by more than a multiplicative factor of $a$.
\ignore{, i.e.,
\[\Prob{\forall S\subset P, |S|\leq \lfloor d/2\rfloor,\left(\frac{\vol(f(S))}{\vol(S)}\right)^{\frac{1}{|S|-1}} \geq a }> \frac1{2}. \]
}

Next our goal is to find $b$ such that with probability at least $1/2$, $f$ does not expand volumes by more than a factor of $b$. Let $l_i=i(d - i + 1) + \frac{(i - 1)(i - 2)}{2}$. We apply union bound over all sets of fixed size $i$, $1\leq i < k$ together with Lemma~\ref{lem:volume_expansion} assuming for the moment that $ib^2 > 4l_i+8$. We want to find $b$ such that 
\begin{eqnarray*}
\binom{n}{i+1}  \frac{e^{-\frac{ib^2 - l_i }{2}}(i b^2)^{l_i / 2 + 1} }{(l_i - 2)^{( l_i - 1) / 2}}< \frac1{2k}.
\end{eqnarray*}
Summing over all different size of subsets we get the desired property with probability at least $1/2$.
\par It suffices to show that $\ln \left(2k \binom{n}{i+1} \frac{e^{-\frac{i b^2 - l_i}{2}}(i b^2)^{l_i / 2 + 1} }{ (l_i - 2)^{( l_i - 1) / 2}}\right)$ is negative for every $1\leq i < k$ and $d\in{[3,\log n]}$. Similarly with Section~\ref{sec:distance} we can assume without loss of generality that $d=c_3\log n$, using the fact that if $d'\leq d$ then $E_{d',s}\leq E_{d,s}$.

Now, since there are at most $\binom{n}{i+1}\leq \left(\frac{ne}{i+1}\right)^{i+1}$ subsets of size $i+1$, it suffices to show that the following quantity is negative,
\begin{eqnarray*}
\ln \left( \frac{k n^{i+1}e^{-\frac{i b^2 - l_i - 2 i}{2}}(i b^2)^{l_i / 2 + 1} (i + 1)^{-(i + 1)} }{ (l_i - 2)^{(l_i - 1) / 2}}\right)  \leq  \ln \left( \frac{k n^{i + 1} e^{-\frac{i b^2 - l_i - 2 i}{2}}i^{ l_i / 2 - i}b^{l_i + 2} }{(l_i - 2)^{(l_i - 1) / 2}}\right) <\\
\ignorecalc{\ln k + (i + 1) \ln n - \frac{i b^2 - l_i - 2 i}{2} + (l_i/2 - i)\ln i + (l_i - 2) \ln b - ( \frac{l_i - 1}{2})\ln (l_i - 2) <\\}
(l_i/2 + 1)\ln i + (i + 1) \ln n  + l_i  \ln b + l_i/2 + 2i +  \ln k -\left(\frac{i b^2}{2} + \frac{l_i - 1}{2}\ln l_i  \right).
\end{eqnarray*}
Note that in the last quantity the positive terms are of order $\OO(i d\ln i + i\ln n)$. The negative terms are of order $\OO(ib^2)$.  Recall that $i<d=c_3 \log n $. It is not hard to see that by choosing $b=c_2 \sqrt{\log n\log\log n}$, where $c_2>0$ a sufficient large constant, then $ib^2 > 4l_i +8$ and the above quantity goes to $-\infty$ as $n$ grows for every $1 \leq i < k$.
\par To sum up, we proved that with probability at least $1/2$, $f$ doesn't expand normalized volumes of subsets of size at most $\lfloor d/2 \rfloor$ by more than a multiplicative factor of $b$.
\ignore{, i.e.,
\[\Prob{\forall S\subset P, |S|\leq k, \left(\frac{\vol(f(S))}{\vol(S)}\right)^{\frac{1}{|S|-1}} \leq b} > \frac1{2}. \]}

Rescaling $f$ by $a$, we conclude that there exists $a,b$ with $a<b$ such that
\begin{eqnarray*}
\Prob{\forall S\subset P, |S|\leq \lfloor d/2 \rfloor, 1\leq \left(\frac{\vol(f(S))}{\vol(S)}\right)^{\frac{1}{|S|-1}} \leq \frac{b}{a}} > 0.
\end{eqnarray*}
This concludes the proof of Theorem~\ref{thm:low-dim-vol}.
{\small
\bibliographystyle{alpha}
\bibliography{refs.bib} 
}
\end{document}